\documentclass[11pt,a4paper]{amsart}
\usepackage{mathrsfs,amssymb}
\usepackage{graphicx}

\addtolength{\evensidemargin}{-15mm}
\addtolength{\oddsidemargin}{-15mm}
\addtolength{\textwidth}{30mm}
\addtolength{\textheight}{10mm}
\addtolength{\topmargin}{-10mm}

\begin{document}

\newtheorem{theorem}{Theorem}
\newtheorem{proposition}[theorem]{Proposition}
\newtheorem{lemma}[theorem]{Lemma}
\newtheorem{corollary}[theorem]{Corollary}
\newtheorem{conjecture}[theorem]{Conjecture}
\newtheorem{question}[theorem]{Question}
\newtheorem{problem}[theorem]{Problem}
\theoremstyle{definition}
\newtheorem{definition}{Definition}

\theoremstyle{remark}
\newtheorem{remark}{Remark}

\renewcommand{\labelenumi}{(\roman{enumi})}
\def\theenumi{\roman{enumi}}


\renewcommand{\Re}{\operatorname{Re}}
\renewcommand{\Im}{\operatorname{Im}}

\def \R{{\mathbb R}}
\def \H {{\mathbb H}}
\def \N {{\mathbb N}}
\def \CC {{\mathbb C}}
\def \Z {{\mathbb Z}}
\def \Q {{\mathbb Q}}
\def \TT {{\mathbb T}}

\def \L {{\mathrm L}}
\def \C {{\mathrm C}}
\def \T {{\mathrm T}}

\def \vol {\hbox{vol}}

\def\pdo{\Psi^0(D^\circ)}

\def\e{{\mathrm{e}}}

\def\i{{\mathrm{i}}}

\newcommand{\area}{\operatorname{area}}

\newcommand{\Op}{\operatorname{Op}}
\newcommand{\dom}{\operatorname{Dom}}
\newcommand{\Dom}{\operatorname{Dom}}

\newcommand{\Norm}{\mathcal N}
\newcommand{\simgeq}{\gtrsim}%
\newcommand{\simleq}{\lesssim}

\newcommand{\UN}{U_N}
\newcommand{\OPN}{\operatorname{Op}_N}
\newcommand{\HN}{\mathcal H_N}
\newcommand{\TN}{T_N}  
\newcommand{\PDO}{\Psi\mbox{DO}}

\title[Eigenfunctions of rational polygons]{Almost all eigenfunctions of a rational polygon are uniformly distributed}

\author{Jens Marklof \and Ze\'ev Rudnick}

\address{School of Mathematics, University of Bristol, Bristol BS8 1TW, UK}
\email{j.marklof@bristol.ac.uk}

\address{Raymond and Beverly Sackler School of Mathematical Sciences,
Tel Aviv University, Tel Aviv 69978, Israel}
\email{rudnick@post.tau.ac.il}

\thanks{J.M.\ was supported by a Royal Society Wolfson Research Merit Award and a Leverhulme Trust Research Fellowship. Z.R.\ was partially supported by the Israel Science Foundation (grant No. 1083/10).}

\date{30 November 2011; to appear in Journal of Spectral Theory}

\subjclass[2010]{35P20; 58J51, 81Q50}

\begin{abstract}
We consider an orthonormal basis of eigenfunctions of the Dirichlet Laplacian for a rational polygon. The modulus squared of the eigenfunctions defines a sequence of probability measures. We prove that this sequence contains a density-one subsequence that converges to Lebesgue measure.
\end{abstract}

\maketitle

\section{Introduction}

One of the key challenges in quantum chaos is to understand how quantum eigenstates distribute in the semiclassical limit. In the present note we will discuss this problem for eigenfunctions of the Dirichlet Laplacian for rational polygons in the plane, with particular attention to the work of Bogomolny and Schmit on ``superscars'' \cite{Bogomolny Schmit}. The conclusion of the study presented here is that any such scarring may, for almost all eigenstates, {\em only} appear in momentum, but not in configuration space.

Let $D\subset\R^2$ be a bounded connected domain with piecewise smooth boundary. The Dirichlet Laplacian $\Delta_D$ is defined as the standard Laplacian acting on functions in $\C^2(D)$ that vanish at the boundary of $D$. The eigenvalues of the positive definite operator $-\Delta_D$ (which plays the role of the quantum Hamiltonian) will be denoted by $0<E_1<E_2\leq E_3\leq\cdots\to\infty$, and the corresponding eigenfunctions by $\psi_1$, $\psi_2$, $\psi_3$, etc. A striking result on the asymptotic distribution of eigenfunctions is the Schnirelman-Zelditch-Colin de Verdi\`ere quantum ergodicity theorem, which in the present setting is due to Zelditch and Zworski \cite{ZZ}: {\em If the billiard flow is ergodic in $S^* D$ (the unit cotangent bundle of $D$), then almost all eigenfunctions become uniformly distributed in $S^* D$.}

Hassell \cite{Hassell} has shown that for certain domains $D$ with ergodic flows there are subsequences of eigenfunctions which fail to become uniformly distributed. Thus the restriction to subsequences in the above quantum ergodicity theorem is in general necessary. We are not aware of examples of domains $D$ for which the full sequence of eigenfunctions becomes uniformly distributed.

In the present note we point out that uniform distribution of the eigenfunctions in configuration space (rather than the full phase space) holds for domains given by rational polygons.  A rational polygon is a simple plane polygon (which means that its interior is connected and simply connected), so that all the vertex angles are rational multiples of $\pi$. Billiards in rational polygons give rise to dynamics that is neither integrable nor ergodic on $S^* D$ (except of course for the few integrable cases---rectangles, equilateral triangle, and right triangles with an acute vertex angle of either $\pi/3$ or $\pi/4$). The absence of ergodicity is due to the fact that, for any initial direction, the motion in $S^* D$ is   to a higher genus flat surface. By exploiting the ergodic properties of directional flows on such surfaces, we can however show that quantum ergodicity still holds in configuration space:

\begin{theorem}\label{main}
Assume $D$ is a rational polygon. Let $\{\psi_n\}$ an orthonormal basis of eigenfunctions of the Dirichlet Laplacian on $D$. Then there is a density-one sequence $n_j \in\N$ such that for any subset $A\subset D$ with boundary of measure zero,
\begin{equation}\label{maineq}
\lim_{j\to\infty} \int_A |\psi_{n_j}(x)|^2 \,dx \\
= \frac{\area(A)}{\area(D)}.
\end{equation}
\end{theorem}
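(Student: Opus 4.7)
The plan is to establish the following variance estimate: for every continuous $f$ on $\overline{D}$,
\begin{equation*}
\lim_{E \to \infty} \frac{1}{\Norm(E)} \sum_{E_n \leq E} \left| \int_D f(x) |\psi_n(x)|^2 \, dx - \bar{f} \right|^2 = 0,
\end{equation*}
where $\Norm(E) = \#\{n : E_n \leq E\}$ and $\bar{f} = \area(D)^{-1} \int_D f\, dx$. Theorem \ref{main} then follows by a standard diagonal extraction: applied to a countable dense family of test functions, the estimate produces a density-one subsequence along which $\int f\,|\psi_{n_j}|^2\, dx \to \bar{f}$ simultaneously for every $f$ in the family, and \eqref{maineq} for $A$ with null boundary is recovered by sandwiching the indicator of $A$ between continuous approximants from above and below.

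The strategy is to adapt the Schnirelman--Zelditch--Colin de Verdi\`ere argument, replacing the global ergodicity hypothesis (which fails for rational polygons) with the Kerckhoff--Masur--Smillie theorem: for almost every $\theta \in S^1$, the directional billiard in $D$, realised as a straight-line flow on the compact translation surface $M_\theta$ obtained by unfolding $D$ across the finite orbit of $\theta$ under the reflection group generated by the sides, is uniquely ergodic with invariant measure equal to normalised Lebesgue on $M_\theta$. Since $M_\theta$ is a finite branched cover of $D$, the push-forward of this Lebesgue measure to $D$ equals $\area(D)^{-1}\, dx$; hence the projection to configuration space of the unique invariant measure of the directional flow is Lebesgue on $D$, independently of $\theta$.

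With this dynamical input the variance argument runs as follows. Writing $P = M_{f - \bar f}$ for the centred multiplication operator and using $e^{itH}\psi_n = e^{itE_n}\psi_n$, one has $\langle \psi_n, P\psi_n\rangle = \langle \psi_n, \langle P\rangle_T \psi_n\rangle$ with $\langle P\rangle_T = T^{-1}\int_0^T e^{itH}Pe^{-itH}\,dt$. By Cauchy--Schwarz and a positive-operator bound, the variance is dominated by $\Norm(E)^{-1}\mathrm{Tr}(\Pi_E\langle P\rangle_T^*\langle P\rangle_T)$, where $\Pi_E$ is the spectral projector onto $E_n \leq E$. A semiclassical Egorov theorem for polygonal billiards (following the Zelditch--Zworski framework) identifies $\langle P\rangle_T$ at leading order with the operator quantizing the classical time average $\langle P\rangle_T^{\mathrm{cl}}(x,\xi) = T^{-1}\int_0^T (f-\bar f)(\pi \circ \Phi_s(x,\xi))\,ds$, where $\Phi_s$ is the billiard flow on $S^*D$ and $\pi$ is the projection to $D$. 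The local Weyl law for piecewise smooth domains (Ivrii) then yields
\begin{equation*}
\limsup_{E\to\infty}\frac{1}{\Norm(E)}\mathrm{Tr}\bigl(\Pi_E\langle P\rangle_T^*\langle P\rangle_T\bigr) \;\leq\; \frac{1}{\vol(S^*D)}\int_{S^*D}\bigl|\langle P\rangle_T^{\mathrm{cl}}\bigr|^2\, d\mu_L.
\end{equation*}
Sending $T\to\infty$, unique ergodicity in almost every direction combined with the projection identity above forces $\langle P\rangle_T^{\mathrm{cl}}\to 0$ pointwise a.e.\ on $S^*D$; dominated convergence, followed by a diagonal choice $T=T(E)\to\infty$ slow enough to keep the Egorov error controlled, completes the estimate.

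The principal obstacle is making the Egorov step rigorous in the presence of the polygon's vertices, where the billiard flow is discontinuous and the standard microlocal calculus fails. I would handle this by excising in phase space a neighbourhood of the orbits that meet a vertex within time $T$: in a rational polygon only countably many directions per vertex produce such orbits, so the excised set has Liouville measure tending to zero, and Melrose--Sj\"ostrand propagation of singularities governs the reflections across the smooth parts of the boundary on the complement. Obtaining the Weyl law and a quantitative Egorov theorem in this singular setting that is uniform as $T \leq T(E) \to \infty$ is where the bulk of the technical work will sit.
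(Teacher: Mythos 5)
Your proposal is correct and follows essentially the same route as the paper: the Zelditch--Zworski variance bound via time averaging, Egorov, and the local Weyl law (with the vertex orbits excised in phase space exactly as you describe), combined with the Kerckhoff--Masur--Smillie theorem and the observation that every directional flow's unique invariant measure projects to Lebesgue measure on $D$, so that time averages of position-only observables converge to the spatial mean. The only cosmetic differences are that the paper fixes $T$ before taking $\limsup_{E\to\infty}$ and only then lets $T\to\infty$ on the classical side, so no diagonal choice $T=T(E)$ is needed, and it works with $a_0\in\C_c^\infty(D^\circ)$ from the outset rather than with continuous functions on $\overline D$.
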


Density-one means here that
\begin{equation}
\lim_{N\to\infty} \frac{\#\{ j : n_j\leq N \}}{N} =1 .
\end{equation}

The proof of Theorem \ref{main} follows Zelditch and Zworski's approach in \cite{ZZ}
until the last step, which is the only place where ergodicity of the billiard flow on $S^*D$ is used. Up to that point all arguments hold for any pseudo-differential operator of order zero. We then specialize to multiplication operators and appeal to the theorem of Kerckhoff, Masur and Smillie \cite{KMS} who showed that for rational polygons, almost all directional flows (see \S\ref{sec:polygons}) are uniquely ergodic. Details of the proof of Theorem \ref{main} are provided in \S\ref{secProof}.

The study of polygonal billiards in the context of quantum chaos goes back to Richens and Berry in \cite{Berry Richens}, who understood that the ``pseudo-integrable'' nature of the classical dynamics has an important effect on the quantum spectrum. The energy level statistics appear to be intermediate between those believed valid for generic chaotic and integrable systems, in that the level spacing distribution shows level repulsion at small distances and Poisson tails at large distances \cite{Berry Richens, Shimizu Shudo 93, Bogomolny Gerland Schmit 1999}. The quantum wave functions were investigated empirically, both numerically \cite{BU, Shimizu Shudo 95} and experimentally
using microwave cavities \cite{KS, DFMRSS}, finding ``strong
scarring" related to families of periodic orbits. Bogomolny and Schmit \cite{Bogomolny Schmit}
constructed long-lived states (quasimodes) associated  to families of
periodic orbits, which they called ``superscars", and suggested that
a positive proportion of true eigenfunctions have large overlaps with such states at
high energies. Our Theorem~\ref{main} shows that this phenomenon cannot occur in configuration space for subsequences of density larger than zero.

A recent rigorous result that holds for {\em all} eigenfunctions is a bound by Hassell, Hillairet and Marzuola \cite{HHM} which establishes that eigenfunctions cannot localize their mass away from the polygon's vertices.

\subsection*{Remarks}

\begin{enumerate}

\item Theorem \ref{main} seems new even for the classical
integrable case of the square billiard.
In that case a proof based on harmonic analysis and simple
arithmetic considerations is also available. See \cite{Jakobson} for
information on the possible quantum limits of exceptional
subsequence of eigenfunctions.

\item We can relax the requirement in our theorem that the polygon $D$ is simple, i.e.,
that its interior is simply connected. (We still require that $D$ is connected.) Then the correct definition of a rational polygon is that the group
generated by the linear parts of the reflections in the sides of the
polygon is finite. This implies that all vertex angles are rational
multiples of $\pi$, and is equivalent to the definition in the simply-connected
case, but not necessarily in the multiply connected case.

\item Theorem \ref{main} also holds for the Neumann Laplacian of $D$, and more generally for the Laplacian of an arbitrary translation surface.

\item A result of the same nature was recently obtained in \cite{RU}, by
completely different methods, for a point scatterer on the torus. For this system, the classical dynamics is integrable, but the quantum problem is not.

\end{enumerate}

\section{Background on billiards in rational polygons}
\label{sec:polygons}
For billiards in a rational polygon $D$,  the
phase space is the unit cotangent bundle $S^*D$, which is
a direct product
\begin{equation}
  S^*D = D\times S^1 = \{(x,\omega=\e^{2\pi\i\phi}): x\in D, \phi \in
  \R/\Z \} .
\end{equation}
The normalized Liouville measure is defined as
\begin{equation}
d\mu(x,\omega) = \frac 1{\area(D)}\,dx\,d\phi .
\end{equation}
The billiard flow $\Phi^t$ is defined on $S^*D$ via specular reflection for all trajectories not hitting the vertices of the polygon $D$; the reflection law for trajectories that hit a vertex can be defined arbitrarily. (The latter trajectories form a set of measure zero and are ignored in the following discussion.) The measure $d\mu$ is invariant under the billiard flow.

Let $\Gamma$ be the group generated by the linear parts of the
reflections in the sides of the polygon $D$, which   is a {\em
finite} group since our polygon is simple and all vertex angles are
rational multiples of $\pi$. For each direction $\theta$, the
set
\begin{equation}
D_\theta:=D\times \bigcup_{\gamma\in \Gamma}\{\gamma \theta\}
\end{equation}
is preserved by the flow; call this restriction $\Phi_\theta^t$ the
{\em directional flow}. Kerckhoff, Masur and Smillie showed
\cite{KMS} that for almost all $\theta$, the directional flow
$\Phi_\theta^t$ is uniquely ergodic.

Observables are (smooth) functions $a(x,\omega)$ on $S^*D$.
``Isotropic observables'' are observables which depend only on the
position variable $x$, that is are independent of the momentum
(direction):
\begin{equation}
  a(x,\omega) = a_0(x) .
\end{equation}
The time average of an observable $a\in \C^\infty(S^*D)$  is
\begin{equation}
  a^T(x,\omega) := \frac{1}{2T}\int_{-T}^T a\circ\Phi^t(x,\omega)dt.
\end{equation}

\begin{lemma}\label{dominated convergence}
Let $a(x,\omega)=a_0(x)$ be an isotropic observable. Then the time
averages satisfy
\begin{equation}\label{eq:dominated convergence}
  \lim_{T\to \infty} \int_{S^*D} |a_T|^2 d\mu
     =  \left| \frac 1{\area(D)}\int_{D} a_0(x') dx' \right|^2 .
\end{equation}
\end{lemma}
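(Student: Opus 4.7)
The plan is to disintegrate the Liouville measure along directions and reduce the claim to unique ergodicity of the directional flows, the one genuinely dynamical input available here via Kerckhoff--Masur--Smillie. Since $d\mu = \frac{1}{\area(D)}\,dx\,d\phi$ and for each fixed $\phi$ the billiard trajectory starting at $(x,\e^{2\pi\i\phi})$ remains on the invariant set $D_\phi$, the time average $a^T(x,\e^{2\pi\i\phi})$ is precisely a time average along the directional flow $\Phi_\phi^t$. I would therefore begin by writing
\begin{equation*}
\int_{S^*D} |a^T|^2 \, d\mu = \frac{1}{\area(D)} \int_0^1 \int_D |a^T(x,\e^{2\pi\i\phi})|^2 \, dx \, d\phi .
\end{equation*}

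Next I would analyse the inner integral for a fixed $\phi$ in the full-measure set supplied by \cite{KMS}. On this set $\Phi_\phi^t$ is uniquely ergodic on $D_\phi$, and the unique invariant probability measure is the normalised counting measure on $\{\gamma\theta:\gamma\in\Gamma\}$ tensored with normalised Lebesgue on $D$. Since $a$ is isotropic, $a_0(x)$ is constant in the momentum variable, so its space average on $D_\phi$ equals $\bar a := \frac{1}{\area(D)}\int_D a_0(x')\,dx'$, independent of $\phi$. Unique ergodicity, together with continuity of $a_0$ on the compact set $\overline D$, then gives uniform convergence $a^T(x,\e^{2\pi\i\phi}) \to \bar a$ as $T\to\infty$ (one should note that the measure-zero set of trajectories striking a vertex is harmless, as they form a $\Phi_\phi^t$-invariant null set).

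Squaring and integrating in $x$, the inner integral satisfies $\int_D |a^T(x,\e^{2\pi\i\phi})|^2\,dx \to |\bar a|^2 \area(D)$ for almost every $\phi$. Because $|a^T|\leq \|a_0\|_\infty$, the integrand is uniformly bounded, so dominated convergence in $\phi$ upgrades this to
\begin{equation*}
\int_{S^*D} |a^T|^2 \, d\mu \;\longrightarrow\; \frac{1}{\area(D)} \cdot |\bar a|^2 \area(D) \;=\; |\bar a|^2 ,
\end{equation*}
which is precisely \eqref{eq:dominated convergence}.

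The main subtlety is the invocation of \cite{KMS}: one must be comfortable that ``uniquely ergodic directional flow on the associated translation surface'' translates, under the standard unfolding, into ``uniquely ergodic directional billiard flow on $D_\phi$'' with the claimed invariant measure, and that isotropic observables on $S^*D$ pull back to genuine continuous functions on the translation surface so that unique ergodicity applies. All remaining steps (Fubini, dominated convergence, boundedness of time averages) are routine.
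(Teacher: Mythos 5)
Your argument is essentially identical to the paper's: both rest on the Kerckhoff--Masur--Smillie unique ergodicity of almost every directional flow to get convergence of $a^T(x,\omega)$ to the average of $a$ over $D_\omega$ (which for isotropic $a$ equals $\frac{1}{\area(D)}\int_D a_0$, independent of the direction), followed by dominated convergence using the bound $|a^T|\le\|a_0\|_\infty$. The only cosmetic differences are that you specialize to isotropic observables from the outset, whereas the paper records the directional average for a general observable and then notes the isotropic simplification, and that your claim of \emph{uniform} convergence in $x$ is stronger than needed (and delicate near singular trajectories); the pointwise statement for $(x,\omega)\in D\times\Omega$ together with dominated convergence, which you also invoke, is what actually carries the proof.
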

\begin{proof}

As a consequence of the Kerckhoff-Masur-Smillie theorem, there is a set of full Lebesgue measure $\Omega\subset S^1$, such that for all $(x,\omega)\in D\times\Omega$, the time
averages $a^T$ converge to the average of $a$ on $D_\omega$:
\begin{equation}
\lim_{T\to \infty}   a^T(x,\omega) =\int_{D_\omega} a d\mu_\omega=
\frac 1{\#\Gamma}\sum_{\gamma\in \Gamma} \frac 1{\area(D)}\int_{D}
 a(x',\gamma \omega)dx' .
\end{equation}
By the dominated convergence theorem we find that
\begin{equation}\label{using Kronecker}
   \lim_{T\to \infty} \int_{S^*D} |a_T|^2 d\mu
=  \int_{S^1} \left| \frac 1{\#\Gamma}\sum_{\gamma\in \Gamma} \frac
1{\area(D)} \int_{D} a(x',\gamma \omega) dx' \right|^2 d\omega .
\end{equation}
We conclude by noting that for isotropic observables, \eqref{using
Kronecker} reduces to \eqref{eq:dominated convergence}.
\end{proof}

\section{Proof of Theorem~\ref{main}\label{secProof}}

We assume throughout this section that $D$ is a rational polygon, and $\{\psi_n\}$ is a fixed orthonormal basis of eigenfunctions of the Dirichlet Laplacian on $D$.

By a standard density argument, Theorem \ref{main} is a direct consequence of the following variant for smooth isotropic observables $a(x,\omega) = a_0(x)$ with compact support in $D^\circ$ (the interior of $D$):
\begin{theorem}\label{main2}
There is a density-one sequence $n_j \in\N$ such that for any $a_0\in\C_c^\infty(D^\circ)$,
\begin{equation}\label{maineq2}
\lim_{j\to\infty} \int_D a_0(x) |\psi_{n_j}(x)|^2 \,dx \\
= \frac{1}{\area(D)}\int_{D} a_0(x)\, dx.
\end{equation}
\end{theorem}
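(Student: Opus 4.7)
The strategy is to run the variance-based quantum ergodicity argument of Zelditch and Zworski \cite{ZZ}, replacing only the final appeal to ergodicity of the billiard flow by Lemma \ref{dominated convergence}, which encapsulates the Kerckhoff-Masur-Smillie input.

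First I would reduce to a variance estimate. By the Chebyshev inequality and a standard diagonal argument over a countable dense subset of $\C_c^\infty(D^\circ)$, it suffices to prove that for each fixed $a_0 \in \C_c^\infty(D^\circ)$,
\begin{equation}
\frac{1}{N(E)} \sum_{E_n\le E}\left|\int_D a_0(x)|\psi_n(x)|^2\,dx - \bar a_0\right|^2 \longrightarrow 0\quad\text{as }E\to\infty,
\end{equation}
where $\bar a_0=\area(D)^{-1}\int_D a_0\,dx$ and $N(E)=\#\{n:E_n\le E\}$.

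Next I would exploit the invariance of each diagonal matrix element under conjugation by $\e^{\i t\sqrt{-\Delta_D}}$: averaging the identity $\langle a_0\psi_n,\psi_n\rangle = \langle \e^{\i t\sqrt{-\Delta_D}}\,a_0\,\e^{-\i t\sqrt{-\Delta_D}}\psi_n,\psi_n\rangle$ over $t\in[-T,T]$ and invoking the Egorov theorem for the polygonal billiard flow (established in \cite{ZZ} for any zero-order pseudodifferential operator) yields
\begin{equation}
\langle a_0\psi_n,\psi_n\rangle = \langle \Op(a_0^T)\psi_n,\psi_n\rangle + o_T(1),
\end{equation}
where $a_0^T$ is the classical time-average from \S\ref{sec:polygons} and $o_T(1)\to 0$ as $E_n\to\infty$ with $T$ fixed. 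By Cauchy-Schwarz, the pseudodifferential product rule applied to $\Op(a_0^T)^*\Op(a_0^T)$, and the local Weyl law,
\begin{equation}
\limsup_{E\to\infty}\frac{1}{N(E)}\sum_{E_n\le E}\left|\langle \Op(a_0^T)\psi_n,\psi_n\rangle\right|^2 \le \int_{S^*D}|a_0^T|^2\,d\mu.
\end{equation}
Since the local Weyl law also gives $N(E)^{-1}\sum_{E_n\le E}\langle a_0\psi_n,\psi_n\rangle\to\bar a_0$, expanding the square in the variance reduces matters to showing $\int_{S^*D}|a_0^T|^2\,d\mu\to|\bar a_0|^2$ as $T\to\infty$---which is exactly Lemma \ref{dominated convergence}.

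The main obstacle is technical: the Zelditch-Zworski pseudodifferential calculus and Egorov theorem on a Euclidean domain with corners are delicate to set up, because the billiard flow is undefined on trajectories through vertices and the wave propagator must accommodate specular reflections at the boundary. The hypothesis $a_0\in\C_c^\infty(D^\circ)$ simplifies boundary issues at the symbol level, but the propagated symbol $a_0\circ\Phi^t$ is still influenced by reflections over any finite time window. These analytical ingredients, valid for any zero-order pseudodifferential operator on $D$, are imported from \cite{ZZ}; the new contribution beyond the standard ergodic setting is the observation, made precise in Lemma \ref{dominated convergence}, that unique ergodicity of almost all directional flows already forces the classical time-averaged $L^2$ norm of an isotropic symbol to collapse onto its spatial mean.
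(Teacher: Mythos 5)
Your proposal follows essentially the same route as the paper's proof: reduce to a quantum variance estimate, establish the key bound $\limsup_{E\to\infty} V(A,E) \le \int_{S^*D} |a_0^T - \overline{a_0}|^2\, d\mu$ for each fixed $T$ via the wave-group average, Egorov's theorem and the local Weyl law as in \cite{ZZ}, and then let $T\to\infty$ using Lemma~\ref{dominated convergence}. The only step the paper spells out more explicitly is the decomposition $A=A_\delta+R_\delta$ cutting off a $\delta$-neighbourhood of the vertices before applying Egorov, which you correctly subsume under the machinery imported from \cite{ZZ}.
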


Let $A$ be a pseudo-differential operator of order zero with
principal symbol $a$. The quantum variance for $A$  relative to the
basis $\{\psi_n\}$ is defined as
\begin{equation}
  V(A,E):=\frac 1{\#\{ E_n\leq  E\}} \sum_{ E_n\leq  E} \big|
  \langle A\psi_n,\psi_n\rangle  - \overline a \big|^2,
\end{equation}
with $\overline a := \int_{S^*D} a(x,\omega)\,d\mu(x,\omega)  $ being the phase-space average of the observable
$a$.
We recall the local Weyl law in this context \cite[Lemma 4]{ZZ}:
\begin{equation}\label{local Weyl law}
\lim_{E\to \infty} \frac 1{\#\{E_n\leq E\}} \sum_{E_n\leq E} \langle
A\psi_n,\psi_n \rangle =\overline a .
\end{equation}

As explained in \cite{ZZ}, to prove Theorem~\ref{main2}, it suffices
to show
\begin{theorem}\label{main thm}
Let $A$ be the multiplication operator $Af(x)=a_0(x) f(x)$ with $a_0\in
\C^\infty_c(D^\circ)$. Then
\begin{equation}
\lim_{E\to \infty}    V(A,E)=0 .
\end{equation}
\end{theorem}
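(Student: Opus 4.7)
The plan is to follow the Zelditch--Zworski scheme \cite{ZZ} for quantum ergodicity in billiards, departing from their argument only at the very last step. Their proof uses ergodicity of the billiard flow on $S^*D$; this is unavailable for a general rational polygon, but Lemma \ref{dominated convergence} provides a sufficient substitute precisely for isotropic symbols.

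First, I would use that each $\psi_n$ is an eigenfunction of the wave group $U_t = \e^{\i t\sqrt{-\Delta_D}}$, so that the diagonal matrix elements $\langle A\psi_n,\psi_n\rangle$ are unchanged if $A$ is replaced by its quantum time-average
\begin{equation}
A_T := \frac{1}{2T}\int_{-T}^T U_{-t}\, A\, U_t\, dt,
\end{equation}
giving $V(A,E) = V(A_T,E)$ for every fixed $T>0$. Cauchy--Schwarz then yields
\begin{equation}
V(A_T,E) \leq \frac{1}{\#\{E_n\leq E\}}\sum_{E_n\leq E} \langle (A_T-\overline a)^*(A_T-\overline a)\psi_n,\psi_n\rangle.
\end{equation}
The Egorov theorem for billiards established in \cite{ZZ} identifies $(A_T-\overline a)^*(A_T-\overline a)$ as a zero-order pseudo-differential operator whose principal symbol equals $|a^T-\overline a|^2$, with $a^T$ the classical time-average defined in Section \ref{sec:polygons}. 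Applying the local Weyl law \eqref{local Weyl law} to this operator gives, for each fixed $T>0$,
\begin{equation}
\limsup_{E\to\infty} V(A,E) \leq \int_{S^*D} |a^T-\overline a|^2\, d\mu.
\end{equation}

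Finally, because $A$ is the multiplication operator with the isotropic symbol $a(x,\omega)=a_0(x)$, Lemma \ref{dominated convergence} applies directly. Expanding the square and using invariance of $d\mu$ under $\Phi^t$ (so that $\int_{S^*D} a^T\, d\mu = \overline a$), one obtains
\begin{equation}
\int_{S^*D} |a^T-\overline a|^2\, d\mu = \int_{S^*D} |a^T|^2\, d\mu - |\overline a|^2,
\end{equation}
which tends to $0$ as $T\to\infty$ by Lemma \ref{dominated convergence}. Letting $T\to\infty$ after $E\to\infty$ then gives $\lim_{E\to\infty} V(A,E) = 0$.

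The main technical obstacle is the Egorov step: the billiard flow in a rational polygon is discontinuous at the boundary and undefined at the vertices, so the classical--quantum correspondence used above is not automatic. The compact support of $a_0$ inside $D^\circ$ keeps the initial symbol away from the vertices, and the pseudo-differential calculus needed to handle boundary reflections is precisely what \cite{ZZ} develops for piecewise smooth billiards, so this step can be imported verbatim. Once this is granted, the genuinely new ingredient is Lemma \ref{dominated convergence}, which---via Kerckhoff--Masur--Smillie---supplies the variance decay that classical ergodicity would otherwise provide.
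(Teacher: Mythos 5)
Your proposal is correct and follows essentially the same route as the paper: the Zelditch--Zworski scheme (quantum time-averaging, Cauchy--Schwarz, Egorov, local Weyl law) yielding $\limsup_{E\to\infty}V(A,E)\leq\int_{S^*D}|a^T-\overline a|^2\,d\mu$, followed by Lemma~\ref{dominated convergence} in place of ergodicity of the full billiard flow. The one point you gloss over---compact support of $a_0$ alone does not keep the \emph{time-evolved} symbol away from the vertices; the paper cuts off the set of initial conditions whose orbits approach a $\delta$-neighbourhood of the vertices within $[-2T,2T]$ and then lets $\delta\to0$---is exactly the technical step you correctly identify and defer to \cite{ZZ}, as does the paper itself.
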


\begin{proof}
We follow the proof of quantum ergodicity for billiards given in \cite{ZZ}.
(This elegant argument was first developed in \cite{Zelditch C^*}.) The key bound we require is
\begin{equation}\label{thebound}
\limsup_{E\to\infty} V(A,E) \leq \int_{S^*D} |a^T-\overline a|^2 d\mu
\end{equation}
for any $T>0$.
Eq.~\eqref{thebound} holds for any billiard with piecewise smooth boundary,
and any pseudo-differential operator $A$ whose Schwartz kernel is compactly supported in $D^\circ\times D^\circ$.
The proof of \eqref{thebound} follows directly from the estimates in \cite{ZZ}; we will summarize the main steps below.
Crucially, \eqref{thebound} does not require any assumptions on the ergodicity of the billiard flow.
We now specialize to $A$ whose principal symbol is an isotropic observable $a(x,\omega)=a_0(x)$,
and  input the (unique) ergodicity of almost all the directional flows in the form of Lemma~\ref{dominated convergence}.
Thus, for any $\epsilon>0$, there exists $T=T_\epsilon>0$ so that $\int_{S^*D} |a^T-\overline a|^2 d\mu <\epsilon$
and therefore $\limsup_{E\to\infty} V(A,E)<\epsilon$. This proves Theorem~\ref{main thm}.
\end{proof}

{\em Outline of the proof of eq.~\eqref{thebound}}: Fix $T>0$. Let
$U_{\delta}\subset S^* D$ be the set of initial conditions that,
within the time interval $[-2T,2T]$, do not hit a
$\delta$-neighbourhood of the vertices of the polygon $D$. We
decompose $A=A_{\delta}+R_{\delta}$, where $A_{\delta}$ is chosen so
that (i) its symbol has essential support in $U_{\delta}$, and (ii)
the measure of the essential support of the symbol of $R_{\delta}$
tends to zero as $\delta\to 0$. The principal symbol of $A_{\delta}$
is denoted by $a_{\delta}$. Since $\psi_n$ is an eigenfunction of
$\Delta_D$ we have
\begin{equation}\label{holder}
\begin{split}
|\langle (A_{\delta}-\overline a_{\delta}) \psi_n,\psi_n\rangle|^2
& = |\langle \langle A_{\delta}-\overline a_{\delta}\rangle_T \psi_n,\psi_n\rangle|^2 \\
& \leq
\langle \langle A_{\delta}-\overline a_{\delta}\rangle_T\langle A_{\delta}-\overline a_{\delta}\rangle_T^*\psi_n,\psi_n\rangle ,
\end{split}
\end{equation}
where $\langle A_\delta \rangle_T := \frac{1}{2T} \int_{-T}^T
\exp(\i t \sqrt{-\Delta_D}) A_\delta  \exp(-\i t \sqrt{-\Delta_D})
dt$. Egorov's theorem  \cite[Lemma 5]{ZZ} implies that the principal
symbol of $\langle A_\delta \rangle_T$ is  $a_{\delta}^T$. This, in
conjunction with the local Weyl law \eqref{local Weyl law} 
applied to
$\langle A_{\delta}-\overline a_{\delta}\rangle_T\langle
A_{\delta}-\overline a_{\delta}\rangle_T^*$, proves
eq.~\eqref{thebound} for the truncated $A_{\delta}$. The result is
extended to the original $A$ by showing that both sides of the
inequality are obtained as the $\delta\to 0$ limit of the truncated
version. The central ingredient in the required estimates is again
the local Weyl law; see \cite{ZZ} for details.


\begin{thebibliography}{99}

\bibitem{BU}
P.~Bellomo and T.~Uzer,
{\em State scarring by ``ghosts'' of periodic orbits}.
Phys. Rev. E 50, 1886--1893 (1994).

\bibitem{DFMRSS}
 E. Bogomolny, B. Dietz, T. Friedrich, M. Miski-Oglu, A. Richter, F. Sch\"afer, and C.
 Schmit, {\em First experimental observation of superscars in a pseudointegrable
barrier billiard}. Phys. Rev. Lett. 97, 254102 (2006).

\bibitem{Bogomolny Gerland Schmit 1999}
E.~Bogomolny, O.~Giraud, and C.~Schmit, {\em Nearest-neighbor
distribution for singular billiards}. Phys. Rev. E 65, 056214
(2002).

\bibitem{Bogomolny Schmit}
E. Bogomolny and C. Schmit, {\em Structure of wave functions of
pseudointegrable billiards}, Phys. Rev. Lett. 92, 244102 (2004).

\bibitem{Hassell}
A.~Hassell,
{\em Ergodic billiards that are not quantum unique ergodic}.
With an appendix by the author and L.~Hillairet.
Ann. of Math. (2) 171 (2010), no. 1, 605--619.

\bibitem{HHM}
A.~Hassell, L.~Hillairet and J.~Marzuola, {\em Eigenfunction concentration for polygonal billiards}, Comm. Partial Diff. Eq. 34 (2009), 475--485.

\bibitem{Jakobson}
D.~Jakobson, {\em Quantum limits on flat tori}. Ann. of Math. (2)
145 (1997),  235--266.

\bibitem{KMS}
S.~Kerckhoff, H.~Masur and J.~Smillie, {\em Ergodicity of billiard
flows and quadratic differentials}. Ann. of Math. (2) 124 (1986),
no. 2, 293--311.

\bibitem{KS}
A.~Kudrolli and S.~Sridhar, {\em Experiments on quantum chaos using
  microwave cavities: Results for the pseudo-integrable L-billiard}.
Pramana, Volume 48, Number 2, 459--467

\bibitem{Berry Richens}
P.~J.~ Richens and M.~V.~Berry, {\em Pseudointegrable systems in
classical and quantum mechanics}. Phys. D 2 (1981), no. 3, 495--512.

\bibitem{RU}
Z.~Rudnick and H.~Uebersch\"ar, {\em Statistics of wave functions
for a point scatterer  on the  torus}, preprint arXiv:1109.4582v3

\bibitem{Shimizu Shudo 93}
Y.~Shimizu and A.~Shudo, {\em Extensive numerical study of spectral
statistics for rational and irrational polygonal billiards}. Phys.
Rev. E 47, 54--62 (1993).

\bibitem{Shimizu Shudo 95}
Y.~Shimizu and A.~Shudo,
 {\em Polygonal billiards: correspondence between classical trajectories and quantum eigenstates}.
 Chaos Solitons Fractals 5 (1995), no. 7, 1337--1362.

\bibitem{Zelditch C^*}
S.\,Zelditch {\em Quantum ergodicity of $C^*$-dynamical systems},
Comm. Math.Phys. 177 (1996), 507--528.

\bibitem{ZZ}
S.~Zelditch and M.~Zworski. {\em Ergodicity of eigenfunctions for
  ergodic billiards.} Comm. Math. Phys. 175, 673--682, (1996).

\end{thebibliography}
\end{document}